\documentclass[conference]{IEEEtran}
\IEEEoverridecommandlockouts
\usepackage{cite}
\usepackage{amsmath,amssymb,amsfonts}
\usepackage[ruled]{algorithm}
\usepackage[noend]{algpseudocode}
\usepackage{graphicx}
\usepackage{textcomp}
\usepackage{xcolor}
\usepackage{url}
\usepackage{theorem}
\newtheorem{theorem}{Theorem}

\newtheorem{proof}{Proof}[section]

\def\BibTeX{{\rm B\kern-.05em{\sc i\kern-.025em b}\kern-.08em
    T\kern-.1667em\lower.7ex\hbox{E}\kern-.125emX}}
\begin{document}

\title{An Efficient Permissioned Blockchain with Provable Reputation Mechanism}

\author{\IEEEauthorblockN{Hongying Chen\IEEEauthorrefmark{1}, Zhaohua Chen\IEEEauthorrefmark{2}, Yukun Cheng\IEEEauthorrefmark{3}, Xiaotie Deng\IEEEauthorrefmark{4},\\ Wenhan Huang\IEEEauthorrefmark{5}, Jichen Li\IEEEauthorrefmark{6}, Hongyi Ling\IEEEauthorrefmark{7}, Mengqian Zhang\IEEEauthorrefmark{8}}
\IEEEauthorblockA{\IEEEauthorrefmark{1}\IEEEauthorrefmark{2}\IEEEauthorrefmark{6}
\IEEEauthorrefmark{7}School of Electronics Engineering and Computer Science, Peking University, Beijing, 100871, China}
\IEEEauthorblockA{\IEEEauthorrefmark{3}School of Business, Suzhou University of Science and Technology, Suzhou, 215009, China}
\IEEEauthorblockA{\IEEEauthorrefmark{4}Center on Frontiers of Computing Studies,Peking University, Beijing, 100871, China}
\IEEEauthorblockA{\IEEEauthorrefmark{5}\IEEEauthorrefmark{8} School of Electronic Information and Electrical Engineering, Shanghai Jiao Tong University, Shanghai, 200240, China}
\IEEEauthorblockA{Email:  \{\IEEEauthorrefmark{1}chenhongyin, \IEEEauthorrefmark{2}chenzhaohua, \IEEEauthorrefmark{4}xiaotie, \IEEEauthorrefmark{6}2001111325, \IEEEauthorrefmark{7}1600012142\}@pku.edu.cn,\\ \IEEEauthorrefmark{3}ykcheng@amss.ac.cn,\{\IEEEauthorrefmark{5}rowdark, \IEEEauthorrefmark{8}mengqian\}@sjtu.edu.cn}
\IEEEauthorblockA{\IEEEauthorrefmark{3}\IEEEauthorrefmark{4}
These authors are corresponding authors.}
}

\maketitle

\begin{abstract}
The design of permissioned blockchains places an access control requirement for members to read, access, and write information over the blockchains. In this paper, we study a hierarchical scenario to include three types of participants: providers, collectors, and governors. To be specific, providers forward transactions, collected from terminals, to collectors; collectors upload received transactions to governors after verifying and labeling them; and governors validate a part of received labeled transactions, pack valid ones into a block, and append a new block on the ledger. Collectors in the hierarchical model play a crucial role in the design: they have connections with both providers and governors, and are responsible for collecting, verifying, and uploading transactions. However, collectors are rational and some of them may behave maliciously (not necessarily for their own benefits). In this paper, we introduce a reputation protocol as a measure of the reliability of collectors in the permissioned blockchain environment. Its objective is to encourage collectors to behave truthfully and, in addition, to reduce the verification cost. The verification cost on provider $p$ is defined as the total number of invalid transactions provided by $p$ and checked by governors. Through theoretical analysis, our protocol with the reputation mechanism has a significant improvement in efficiency. Specifically, the verification loss that governors suffer is proved to be asymptotically $O(\sqrt{T_{total}})$ ($T_{total}$, representing the number of transactions verified by governors and provided by $p$), as long as there exists at least one collector who behaves well. At last, two typical cases where our model can be well applied are also demonstrated. To our best knowledge, our work is the first one to provide an analytical result on a reputation mechanism in permissioned blockchains and the theoretical results imply our protocol achieves high performance.
\end{abstract}

\begin{IEEEkeywords}
Permissioned blockchain, Transaction verification, Reputation mechanism, Hierarchical structure
\end{IEEEkeywords}

\section{Introduction}\label{Introduction}

In recent years, the popularity of permissioned blockchains has increased visibly, much due to the availability of designed permissions to different users on the network. The most significant difference in the permissioned blockchains is a requirement for participants to be identified. 

While most widely known blockchain schemes, including Bitcoin~\cite{nakamoto2019bitcoin}, Ethereum~\cite{buterin2014next}, and Litecoin~\cite{lee2011litecoin}, are permissionless open ecosystems that allow anyone to join as a validator without authorization, permissioned blockchains are relatively closed and only allow authorized participants to access and to play designated roles. Such permissioned requirements provide an extra security layer to unleash the power of accountability and reliability in blockchains for the regulated institutes such as governments and private companies. 


A key benefit in adopting a permissioned system is to tie a participant's trustfulness in the current task to one's long term behalf. In this paper, we do so by proposing a reputation measure to reduce transaction verification costs. In many of Today's blockchains with smart contracts, transaction verification is no longer as simple as in the transaction verification task required for the Bitcoin system, which mainly verifies digital signatures and local $UTXOs$. Early discovery of the more complicated on-chain transactions' incorrectness will save time cost of transporting them all over the blockchain network. Therefore, reducing transaction verification costs has become an essential task for blockchain applications.

We propose a permissioned blockchain model in a hierarchical structure. There are three kinds of nodes: \textit{providers}, \textit{collectors} and \textit{governors} respectively. Providers offer original transactions to collectors. Collectors will verify the transactions from providers and send labeled transactions to governors. Furthermore, governors are responsible for validating transactions, proposing blocks, and maintaining the ledger. Distinctively, governors only need to verify some of the received transactions with collectors' labels on them. Our protocol aims to reduce the number of invalid transactions checked by governors, which harms efficiency.

As with many real-world cases, collectors \textit{overlap} in gathering transactions from providers. In this sense, it is convenient to apply a \textit{reputation mechanism} on the collectors. In our model, the governor maintains a local reputation vector for each collector that reflects the collector's reliability. Our protocol runs in rounds. In each round, a leading governor, who is selected via a \textit{Proof-of-Stake} (PoS) scheme, is in charge of processing transactions and appending a block to the end of the ledger. For each original transaction $tx$ contained in several labeled transactions from collectors, the leader will decide whether to verify it according to corresponding collectors' reputation. If the leader decides not to verify, then transaction $tx$ would be added into $UncheckedList$; otherwise, the leader will add it into $TXList$ or $InvalidList$ once the verification result is ``valid" or ``invalid" respectively. Once a transaction is verified, governors will update corresponding collectors' reputation according to their label on the transaction. Reputation also serves as the incentive mechanism for collectors to get more profits with higher reputations.

Our protocol pays much attention to efficiency. The loss on provider $p$ is defined as the total number of invalid transactions provided by $p$ and checked by governors. Through theoretical analysis, our protocol with the reputation mechanism has a significant improvement in efficiency.To be specific, the verification loss that governors suffer is proved to be asymptotically $O(\sqrt{T_{total}})$ ($T_{total}$ represents the number of transactions verified by governors and provided by $p$), as long as there exists at least one collector who behaves well. 

Applying our protocol in a permissioned blockchain, governors only pack the transactions in $TXList$ to a block to guarantee the correctness of data, since $TXList$ only contains all valid transactions which are verified in the current round. The Merkle tree root of $(InvalidList, UncheckedList)$ is also contained in a block and $(InvalidList, UncheckedList)$ will be broadcast by governors. From the information of $(InvalidList, UncheckedList)$, it is easy for other participants to make clear which transactions have been checked to be invalid or have not been verified. Once a provider $p_i$ finds his correct transaction is unchecked, he can send this transaction again. The theoretical analysis in this work shows that a valid transaction could be verified within $O(u_i)$ rounds, where $u_i$ is the number of collectors that $p_i$ is connected with.

We also demonstrate our model's feasibility by applying it to the case of IoT data collection and horizontal strategic alliances. In a word, our model solves many issues in the design of reputation mechanisms in permissioned blockchains.

The rest of the paper is arranged as the following: Section~\ref{Background} introduces the background of our model and some related works. In Section~\ref{Model} and Section~\ref{Algorithm}, we build our model and describe the main protocol. We give a full analysis of our model in Section~\ref{Analysis} and apply our model to real-world cases in Section~\ref{Applications}. We conclude our work in Section~\ref{Conclusion}.

\section{Background and Related Work}\label{Background}



\subsection{Transaction Verification in Blockchain}
In traditional blockchain systems, transaction verification includes two parts: authentication verification and correctness verification. Authentication verification is to verify whether relevant signatures are correct. And correctness verification is to verify whether the content of this transaction is valid. For example, in BitCoin, the validity of a transaction equals to that the verification of corresponding signatures is correct and the changes of UTXOs are legal.

Since data in blockchain systems are always digital, authentication can be easily verified by digital signatures. However, sometimes it is not easy to judge the validity of the content of a transaction, which may cost a lot. For example, if the content of the transaction is data trading, then verifying the data will consume a lot of computing power and bandwidth. Under these settings, verifying wrong transactions is a huge waste for miners. At the same time, a fully decentralized system will also bring additional burdens caused by repeated verification of transactions. As a result, we need an efficient distributed system to reduce unnecessary cost in verifying transactions.

\subsection{Consensus in Permissioned Blockchains}

Compared with \textit{permissionless blockchains} focusing on public environments where every node can arbitrarily enter or leave the network, \textit{permissioned blockchains} are operated only by known entities. These permissioned systems maintain the identity and status of participants. For example, in \textit{consortium blockchains}, members of a consortium or in business manage a \textit{permissioned blockchain} network. Meanwhile, in \textit{private blockchains}, only one trusted entity is in charge of the chain~\cite{cachin17blockchain}.

Various permissioned blockchain schemes have appeared and already been implemented in practice. Up to v0.6 of Hyperledger Fabric~\cite{urlfabric}, a native PBFT~\cite{castro02practical} is implemented. In the newest version, an execute-order-validate architecture, rather than the traditional order-execute one, was introduced to carry out the modularity of the protocol. Such a design reduces the need for hard-coding the consensus protocol into the platform and enables Fabric to fit into different circumstances~\cite{androulaki2018hyperledger}. For example, besides the traditional PBFT, BFT-SMaRt~\cite{urlsmart} is also realized as a consensus module. The latter is now considered as one of the most advanced implementations of BFT available. The core consensus scheme lying in Tendermint Core~\cite{urltendermint} is a variation of PBFT which also resists $f < n/3$ Byzantine nodes. However, Tendermint’s most momentous difference from PBFT is the continuous rotation of the leader. Namely, the leader is changed after every block. R3 Corda~\cite{urlcorda} realizes a Hashed Directed Acyclic Graph (Hash-DAG) instead of a chain. A transaction is only stored by those nodes who are affected by the transaction, that is, a node only stores a part of the Hash-DAG. When implemented with Raft~\cite{Ongaro2014in}, R3 Corda tolerates half of the nodes' crashing. The protocol is secure with $f<n/3$ nodes acting arbitrarily when implemented with BFT-SMaRt, as stated before. Quorum~\cite{urlquorum} supports a consensus scheme known as QuorumChain which specifies never-crash-nor-subvert block-maker nodes who are permitted to propose blocks. The scheme guarantees safety and liveness with $f<n/3$ arbitrary-fault nodes and one block-maker. (The chain may fork with multiple block-makers.) Other platforms include Ripple~\cite{urlripple}, Multichain~\cite{urlmulti}, etc..

Besides, beyond traditional BFT algorithms, several fresh consensus models were proposed in recent years, which are considered more suitable for permissioned blockchains. \textit{Proof-of-Authority} (PoA)~\cite{PoA2018}, as a new family of BFT algorithms, requires fewer message exchanges hence provides better performance. Meanwhile, a new message-based consensus scheme named as \textit{Proof-of-Authentication} (PoAh)~\cite{PoAh2020} was presented just a few months ago, which is a lightweight and sustainable permissioned blockchain.

\subsection{Reputation Mechanisms}

Reputation in the \textit{Peer-to-Peer} (P2P) networks has been studied for decades. As the name suggests, reputation works as a measure of the peer's reliability and is evaluated according to his historical behaviors. Various topics related to reputation research, such as the systems design~\cite{DBLP:conf/nossdav/GuptaJA03}~\cite{DBLP:journals/tpds/ZhouH07}, security issue~\cite{buchegger2003robust}~\cite{DBLP:journals/csur/HoffmanZN09} and privacy protection~\cite{DBLP:conf/ecweb/KinatederP03}~\cite{DBLP:journals/cn/WangCYGTW16} have been deeply studied in the literature. In recent years, the emergence and development of blockchain technology bring some new thoughts to the research of reputation. The existing works mainly focus on the following two aspects.

First, because of blockchain's immutability and decentralization, some researches utilize it as a wonderful tool to establish the reputation system~\cite{DBLP:conf/icitst/DennisO15}~\cite{dennis2016rep}~\cite{DBLP:conf/sec/SchaubBHB16}. For example,~\cite{DBLP:conf/dasfaa/GaiWDP18} uses the permissioned blockchain to store the transactional history, which is regarded as the reputation evidence. Then all registered participants' reputation can be evaluated distributedly. Instead of adopting traditional consensus protocols like \textit{Proof-of-Work} (PoW) and \textit{Proof-of-Stake} (PoS) in the process of block generation, the proposed system in ~\cite{DBLP:conf/dasfaa/GaiWDP18} presents a new protocol called \textit{Proof-of-Reputation} (PoR). The protocol assigns one of the nodes involved in the current block, who has the highest reputation value, to be the leader of the current round. Participants all believe that the node with a higher reputation value could provide better services. As a result of taking full advantage of the reputation, the PoR protocol performs efficiently in the designed system.

Second, some studies have applied the concept of reputation as an incentive to the blockchain applications. Our work falls into this research area. ~\cite{nojoumian2018incentivizing} proposes a reputation-based framework for blockchain systems which use PoW as their consensus protocol, to avoid dishonest mining strategies. In the framework, each node has a public reputation value, representing how well he has so far performed in the system. When the pool manager sends invitations to miners to form his mining pool for the proof-of-work computation, one's probability to be invited is related to his reputation value. Nodes who are more reputable have a higher chance to be invited into the mining pools and consequently gain more revenue. Since this public reputation system is sustained over time, miners are incentivized to behave honestly to maximize their long-term utility. However, this paper only lists several possible influence factors and analyzes the update of reputation values qualitatively. The concrete forms of update function and probability function are absent.~\cite{DBLP:journals/corr/abs-2001-06778} presents a protocol called CycLedger for the sharding-based blockchain, which introduces the concept of reputation to provide nodes with enough incentive to enter the system. Nodes in the system are required to give opinions on the validity of requested transactions. One's reputation is computed according to the cosine similarity between his opinions and the consensus results. In this way, the node's reputation is a good reflection of the honest computational resources he has contributed to the system. By assigning nodes with more computational resources to high-workload positions, the efficiency is further enhanced. Also, as a higher reputation brings more revenue, the protocol attracts nodes to participate actively and honestly in the system.

\section{Model}\label{Model}

\subsection{Architecture Overview}\label{Architecture Overview}

Since our protocol is implemented on a \textit{permissioned blockchain}, the identities of participants in the network are all maintained by an \textit{Identity Manager} (IM), who is also responsible for recording participants' roles in blockchain system. Meanwhile, it is in charge of providing nodes credentials that are used for authenticating and authorizing. As a default, an IM should contain all standard Public-Key Infrastructure (PKI) methods and play the role of a Certificate Authority (CA). If no otherwise specified, all interactions within the network are authenticated via digital signatures and IM offers each participant this scheme.

\begin{figure}
    \centering
    \includegraphics[scale = 0.35]{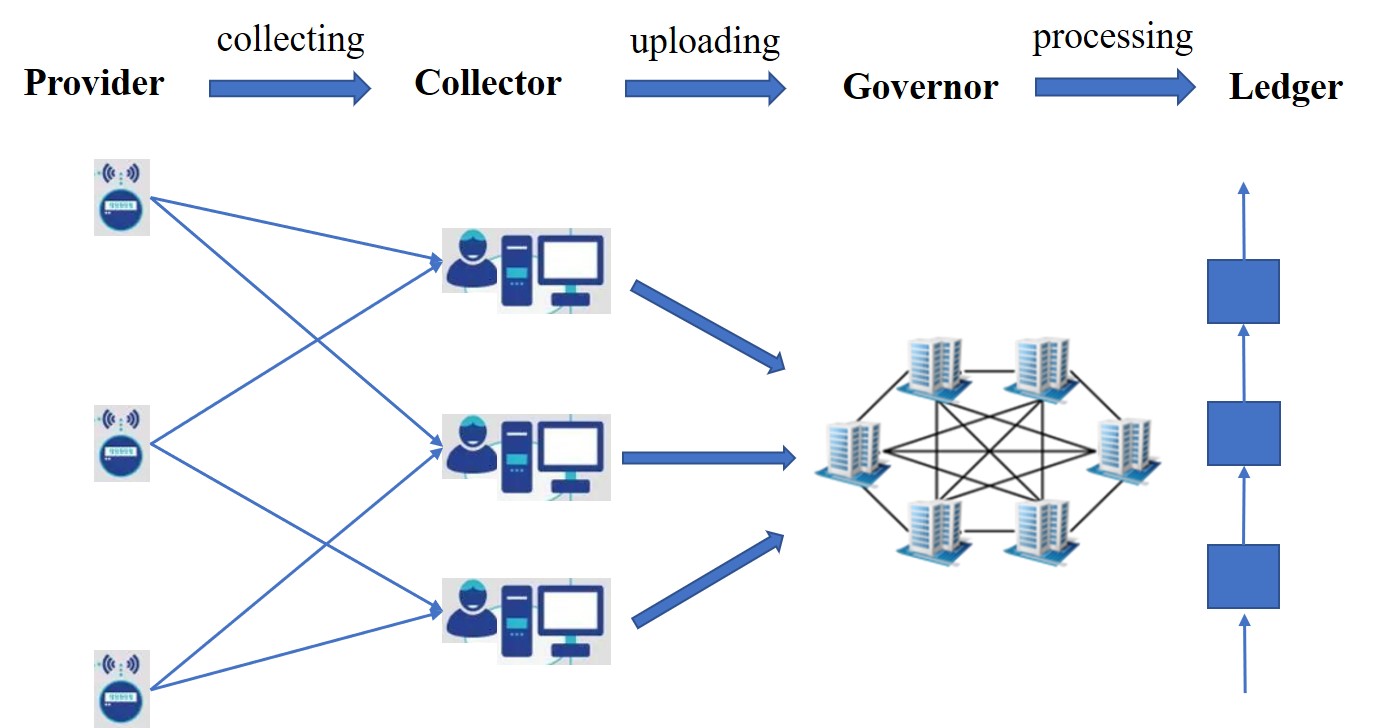}
    \caption{Hierarchical model of the network}
    \label{Hierarchical_Model}
\end{figure}

The hierarchical model studied in this work includes three types of participants: providers, collectors, and governors.

\begin{itemize}
    \item \textit{Providers} would provide original transactions to collectors. Although the correctness of transactions can't be guaranteed, they would sign on transactions together with the \textit{timestamp} so that no collector could forge a transaction.

    \item \textit{Collectors} are responsible for collecting, verifying, and uploading transactions to governors. Specifically, after receiving a signed transaction from a provider, a collector would check it and label it. For each transaction, a collector can label either $+1$ or $-1$, where $+1$ implies that the transaction is considered to be \textit{valid} and $-1$ otherwise. Then the collector would submit the transaction with the corresponding labels and his signature to governors, no matter whether the label is $+1$ or $-1$.

    \item \textit{Governors} are in charge of the block generating for the chain. Because the underlying blockchain is permissioned, we assume that a governor would be elected as the leader in each round by PoS consensus protocol. The leader is responsible for processing transactions and proposing a block. For each original transaction $tx$ contained in several labeled transactions from collectors, the leader will decide whether to verify it according to corresponding collectors' reputations. Finally, the leader will generate a block containing valid transactions and the Merkle tree root of other transactions. In many application scenarios, the cost to verify a transaction may be very high, since a governor shall launch a lot of investigations on this transaction directly from a provider. Once the verification result of a transaction is invalid, it's just a waste of the computation of governors. Therefore, we define the loss on provider $p$  as the total number of invalid transactions provided by $p$ and checked by governors.
\end{itemize}

In this paper, we would like to propose a protocol to collect, verify and pack transactions for a permissioned blockchain environment, by introducing reputation as a measure on the reliability of collectors to encourage collectors to correctly label the transactions. Thereby governors could verify invalid transactions as few as possible, and the verification loss of the whole system can be minimized. The execution of our protocol is partitioned into multi-\textit{rounds}, each containing three phrases, shown in Fig. \ref{Hierarchical_Model}:

\begin{itemize}
    \item \textit{Collecting.} An original transaction $tx$ offered by a single provider is sent to several collectors.

    \item \textit{Uploading.} Collectors verify the received original transactions and label them. These labeled transactions $Tx$ are then uploaded to governors.

    \item \textit{Processing.} In each round, a leader is elected by PoS protocol from all governors. An original transaction $tx$ could be contained in several labeled transactions sent by collectors. The leader chooses a collector according to his reputation, and decides whether to verify $tx$ according to this collector's label. He would discard those unverified transactions. For the verified transactions, on the other hand, the governor would append invalid ones to $UncheckedList$, and append valid ones to $TxList$. Meanwhile, the leader updates the reputations of all collectors according to the verification result by reputation mechanism, which will be introduced in next section. At the end of one round, the leader appends a block, and governors reach consensus on the block as well as the updated reputations of collectors.
\end{itemize}

\subsection{Block Structure}\label{Block Structure and Transaction Validity}

In each round, a new block is appended to the end of a ledger. Each block contains transactions signed by the corresponding providers together with their labels. Note that any block in the ledger is tamper-proof by containing the hash of the previous block in the current one. Moreover, each block has a serial number, such that the blocks in the chain have one-by-one increasing serial numbers. Formally, a block $B$ can be written as
$$B = (s, g_{leader}, TXList = \{tx_1, \cdots, tx_b\}, MT, h)$$
where $s$ is the serial number of block $B$, $g_{leader}$ is the leader in current round, $TXList$ is the list of verified valid transactions, $MT$ is the Merkle Tree root of other collected transactions and $h$ is the hash value of previous block with serial number $s-1$. Here, $b$ is the amount for valid transactions concluded in the block, which is upper bounded by $b_{limit}$.

\subsection{Network Assumption}\label{Network Assumption}

As the underlying system is permissioned, we assume that the system is synchronous, i.e., there is a known upper bound on processing delays, message transmission delays, each node is equipped with a local physical clock and there is an upper bound on the rate at which any local clock deviates from a global real-time clock~\cite{cachin2011introduction}.

The network structures may be diverse and complicated in different actual cases. For simplification, we assume that each governor has connections with all collectors.
The discussion can be easily extended to more general cases.

\subsection{Reputation Mechanism}\label{Reputation Mechanism}

Since all participants are rational, different collectors may provide different labels for a same transaction, due to their different purposes. How to make a judgement on a labeled transaction according to opinions from several collectors is a key issue for governors. Thus a {\em reputation mechanism} is adopted in our protocol, such that the governors keep a reputation list. Each element in reputation list corresponds to one provider $p_i\in P$ and is a $u_i$-length vector, in which the $j$-th component, $j=1,\cdots,u_i$, is just the reputation of the $j$-th collector connected with $p_i$. Intuitively, a collector's reputation is a symbol of his reliability and can help governors make a proper judgement on the labeled transaction from him.


In the reputation mechanism, the probability that one collector is selected by a governor is proportional to his reputation. If the label on the original transaction $tx$ given by this collector is $+1$, then the current leader will verify it; otherwise, he will just append it to $UncheckedList$. Therefore, the reputation mechanism can encourage collectors to behave truthfully, since the higher reputation of one collector is, his opinion on one transaction is more likely accepted. On the other hand, once one collector's malicious behavior is discovered, he must be punished by reducing his reputation.


\subsection{Settings and Notations}\label{Notations}

Based on the above statements, we propose the following notations and some necessary assumptions to formulate the hierarchical setting of the model.
\begin{itemize}
\item The provider set, containing $l$ providers, is denoted by $P=\{p_1, p_2, \cdots, p_l\}$;
    \item The collector set, containing $n$ collectors, is denoted by $C=\{c_1, c_2, \cdots, c_n\}$;
    \item The governor set, containing $m$ governors, is denoted by $G=\{g_1, g_2, \cdots, g_m\}$;
    \item Each provider $p_i\in P$ is connected with $u_i$ collectors;
    \item All the governors maintain a reputation list $\vec{R}_i=(r_i(1), r_i(2), \cdots, r_i(u_i))$ for each provider $p_i\in P$, $i=1,\cdots,l$, where collectors linked to $p_i$ are denoted by $c_i(1), c_i(2), \cdots, c_i(u_i)$, respectively, and $r_i(k)$ reflects the governor $g$'s trust on the collector $c_i(k)$ about the transactions created by $p_i$ and uploaded by $c_i(k)$.
\end{itemize}

There are two types of transactions discussed in this work.
\begin{itemize}
    \item \textit{tx}: unlabeled transaction. Transaction $tx$ is original transaction provided by a provider and has the signature of this provider.
    \item \textit{Tx}: labeled transaction. $Tx = (tx,label,sig_{c_i}(tx,label))$ where $tx$ is $Tx$'s original transaction, $label$ is the label of $tx$ given by collector $c_i$ and $sig_{c_i}(tx,label))$ is the digital signature of $(tx,label)$ generated by $c_i$.
\end{itemize}

Generally, we may call both of them ``transactions". But in some specific algorithms, we shall accurately distinguish them by using notations $tx$ or $Tx$, respectively. In addition, we also apply following functions in subsequent sections.

\begin{itemize}    
    \item $broadcast_{provider}(tx)$. For any provider, he can execute operation $broadcast_{provider}(tx)$ to broadcast an original transaction $tx$ to collectors he is linked to. 

    \item $broadcast_{collector}(Tx)$. For any collector, he can execute operation $broadcast_{collector}(Tx)$ to disseminate a labeled transaction $Tx$ to all governors. 

    \item $broadcast_{governor}(message)$. Each governor can execute operation $broadcast_{governor}(message)$ to broadcast $message$ to all other governors.

    \item $broadcast_{all}(message)$. Everyone in our protocol can execute operation $broadcast_{all}(message)$ to broadcast block $message$ to all linked participants.

    \item $validate_{collector}(tx)$. For any collector, he can execute function $validate_{collector}(tx)$ to check whether transaction $tx$ is valid. If $tx$ is valid, then this function returns $true$, and $false$ otherwise. Since collectors have direct connections with providers, the cost to execute this validation function is not high.

    \item $validate_{governor}(Tx)$. For any governor, he can execute function $validate_{governor}(Tx)$ to check whether a labeled transaction $Tx$ is valid. If $Tx$ is valid, this function returns $true$, and $false$ otherwise. Because governors are farther away from providers, the cost to execute the validation function is much higher than that on collectors.

    \item $parse(Tx)$. Function $parse(Tx)$ returns $(tx,label)$ of a labeled transaction $Tx$.

\end{itemize}

It is worth noting that we've assumed that digital signature is unforgeable. So we ignore the process of verifying signatures for simplicity.

\section{Algorithm}\label{Algorithm}

As stated in Section III, the execution of our protocol is partitioned into multi-rounds, each containing collecting phase, updating phase, and processing phase.

\subsection{Collecting Phase}

In collecting phase, providers offer transactions to collectors.
Specifically, each provider $p_i \in P$ generates some transactions, signs on them together with the timestamp $t$ to constitute $tx$, and then broadcasts them to the $u$ collectors he is in connection with by invoking $broadcast_{provider}(tx)$.

\subsection{Uploading Phase}

In uploading phase, collectors process transactions which are from providers, validate and upload them to governors. Once a collector $c$ receives a transaction $tx$ from provider $p$, he operates $validate_{collector}(tx)$ to check whether $tx$ is valid. The label $label \in \{+1,-1\}$ on $tx$ would be set to $+1$ if $validate_{collector}(tx)$ returns $true$. Otherwise, $label$ will be set to $-1$. Then collectors sign on $(tx,label)$ to generate labeled transaction $Tx$ and broadcast it to governors. Algorithm~\ref{Transactions Uploading} demonstrating this process is shown below.

\begin{algorithm}
    \caption{Transactions Uploading}
    \label{Transactions Uploading}
    \algrenewcommand\algorithmicwhile{\textbf{upon}}
    \begin{algorithmic}[1]
        \Procedure{Transactions\_Uploading}{}\Comment{For collector $c\in C$}
        \While{$\langle\ deliver_{provider}\ |\ p,\ tx\ \rangle$}
            \If{$validate_{collector}(tx) == true$}
            \State $label \gets +1$
            \Else
            \State $label \gets -1$
            \EndIf
            \State $Tx \gets (tx, label, sig_{c}(tx, label))$
            \State \textbf{invoke} $broadcast_{collector}(Tx)$
        \EndWhile
        \EndProcedure
    \end{algorithmic}
\end{algorithm}

\subsection{Processing Phase}

In processing phase, the leader in the current round processes transactions sent by collectors and proposes a new block. The leader is elected by  PoS protocol from all governors in each round, and we'll explain in detail the consensus and leader election algorithm in the fourth part of this phase. The leader may receive several copies of an original transaction $tx$ from collectors, each copy with a label on it. The leader shall make a judgement whether to verify $tx$, according to the reputations of the collectors relative to it. If the leader decides not to verify, then transaction $tx$ would be added into $UncheckedList$; otherwise, the leader will add it into $TXList$ or $InvalidList$ once the verification result is ``valid" or ``invalid", respectively.

Then the governors update the reputations of collectors. Recall that for a governor $g$, the reputation of collectors linked to $p_i$, $\vec{R}_{i}$, is a $u_i$-length vector $\vec{R}_{i} = (r_{i}(1), \cdots, r_i(u_i))$, where we denote $c_i(1), \cdots, c_i(u_i)$ as the $u$ providers that $p_i$ is linked to, and $r_{i}(j)$ is the reputation of $c_i(j)$ on provider $p_i$.

At the end of each round, the leader generates a block and appends it to the ledger. After block generation, all governors would reach a consensus on this new block, and a new round starts. Each round contains four steps as follows.

\subsubsection{Transaction Screening}

In this step, the leader $g_{leader}$ receives transactions in this round and decides whether to verify them one by one. Then each transaction must be appended to $TxList$, $InvalidList$ or $UncheckedLiST$.

As mentioned before, we only consider authenticated transactions since we assume that digital signature is unforgeable. Assume transaction $tx$ is provided by provider $p_i$. Here we give some necessary notations used in this section. Denote the set of collectors linked to $p_i$ by $C_i=\{c_i(1),\cdots,c_i(u_i)\}$, and the reputation list of these collectors on $p_i$ by $\vec{R}_i = (r_i(1), \cdots, r_i(u_i))$.

Suppose a transaction $tx$ proposed by $p_i$ is broadcast to collectors in $C_i$. The leader starts a timing when receiving the first copy of $tx$ with an attached label. Within time of $\Delta$, the leader may receive another $x - 1$ copies of $tx$, each with a label from collectors in $C_i$. Here $x \leq u_i$, since some collectors may simply discard $tx$ or can't finish verifying $tx$ in time.

The leader chooses a collector $c_i(k)$ randomly among $u$ collectors in $C_i$ with the probability proportional to $\exp(\eta r_i(k))$, where $r_i(k)$ is the reputation of $c_i(k)$ on provider $p_i$, and $\eta>0$ is a preset parameter. 
Suppose that $c_i(k)$ is chosen. Noting that when $c_i(k)$ fails to send a copy of $tx$ intentionally or unintentionally, we simply think $c_i(k)$ labels it as $-1$. If the label of $tx$ by $c_i(k)$ is $-1$, the leader would not check it and then append it to $UncheckedList$. If the label is $+1$, then the leader would check $tx$, and append it to $TxList$ once it's proved to be valid, or to $InvalidList$ once it's proved to be invalid.

Algorithm~\ref{Transactions Screening} well illustrates Transactions Screening step.

\begin{algorithm}
    \caption{Transactions Screening}
    \label{Transactions Screening}
    \algrenewcommand\algorithmicwhile{\textbf{upon}}
    \begin{algorithmic}[1]
        \Procedure{Transactions\_Screening}{}\Comment{For governor $g_{leader}$}
        \While{$\langle\ deliver_{collector}\ |\ c,\ Tx\ \rangle$}        
                \State $(tx, label) \gets parse(Tx)$
                \If{$received[tx] == \emptyset$}
                    \State $starttime(tx, \Delta)$\Comment{start timing}
                \EndIf
                \If {$(c, -label) \notin received[tx]$}
                    \State $received[tx]\gets received[tx] \cup \{(c, label)\}$
                \EndIf
        \EndWhile
        \State
        \While{$endtime(tx)$}\Comment{timing is ended}
            \For{all collectors $c_i(k) \in C_i$}
                \State $prob_k\gets \frac{\mathrm{e}^{\eta r_i(k)}}{\sum_{c_i(l) \in C_i} \mathrm{e}^{\eta r_i(l)}}$
            \EndFor
            \State draw a collector $c_i(k) \in C_i$ with probability $prob_k$
            \If{$(c_i(k),+1) \in received[tx]$}
                \State $validbit \gets validate_{governor}(tx)$
                \State $cnt_i \gets cnt_i + 1$
                \State $message \gets (tx, validbit, received[tx], cnt_i)$
                \State \textbf{invoke} $broadcast_{governor}(message)$
                \If{$ validbit == true$}
                    \State $append(TXList, tx)$
                    \State \textbf{invoke} $Reputation\_Updating(tx, +1)$
                \Else
                    \State $append(InvalidList, tx)$
                    \State \textbf{invoke} $Reputation\_Updating(tx, -1)$
                \EndIf
            \Else
                \State $append(UncheckedList, tx)$
            \EndIf
        \EndWhile
        \EndProcedure
    \end{algorithmic}
\end{algorithm}
In Algorithm 2, the leader shall check if collector $c$ has sent a different label on $tx$ to avoid being cheated in line 6. As written in line 14, 15 and 16, the leader would increase $cnt_i$ by $1$ and broadcast $(tx, validbit, received[tx], cnt_i)$ to other governors after verifying $tx$. This broadcasting step informs how collectors label $tx$ and whether $tx$ is valid. Then other governors would trigger $Reputation Updating$ to synchronize the reputation. $cnt_i$ shows the order of $tx$ and helps governors not to omit any reputation updating.

There are two details that haven't be shown in Algorithm 2. First, actually it doesn't matter which governor to validate $tx$ since governors won't tolerate wrong transactions. So the leader can ask any governor to trigger $validate_{governor}$ and get the result back. Second, waiting time $\Delta$ may not be contained in a round, so collectors will send $Tx$ to governors besides the leader, and other governors shall maintain $received[tx]$ too.


\subsubsection{Reputation Updating}


For each transaction that has been validated by the leader, the reputations of collectors that label this transaction correctly will keep the same. In the meantime, those who report the opposite labels will receive a loss of one unit in their reputations. If transaction $tx$ is not verified, then the leader will not conduct the reputation updating operation on collectors who forward it.

Algorithm~\ref{Reputation Updating} demonstrates the description above. Recall that $tx$ is generated by provider $p_i$.

\begin{algorithm}
    \caption{Reputation Updating}
    \label{Reputation Updating}
    \begin{algorithmic}[1]
        \Procedure{Reputation\_Updating}{tx, status}\Comment{For governor $g$ }
        \If{$status == +1$}
            \For{$c_i(k) \in C_i$}
                \If{$(c_i(k), +1) \notin received[tx]$}
                    \State $r_i(k)\gets r_i(k)-1$
                \EndIf
            \EndFor
        \EndIf
        \If{$status == -1$}
            \For{$c_i(k) \in C_i$}
                \If{$(c_i(k), +1) \in received[tx]$}
                    \State $r_i(k)\gets r_i(k)-1$
                \EndIf
            \EndFor
        \EndIf
        \EndProcedure
    \end{algorithmic}
\end{algorithm}

\subsubsection{Block Proposing}
At the end of each round, the leader generates a block $B=(sn, g_{leader}, TXList = \{tx_1, \cdots, tx_b\}, MT, h)$, which contains all valid transactions in $TXList$.

Noting that $sn, h$ can be easily calculated through the former block, and $TXlist, InvalidList, UncheckedList$ are generated in \textit{Transaction Screen} step. The leader computes the Merkle tree root of $(InvalidList, UncheckedList)$ and then gets $MT$. After generating a new block, the leader broadcasts it to all other participants via $broadcast_{all}(B)$. The leader also triggers $broadcast_{all}((InvalidList, UncheckedList))$ to inform transactions which have not been inspected or are illegal. Once receiving the information,
the provider can figure out whether his valid transactions are checked or not, and then decide whether to send it again. All collectors will ignore original transactions that have been proved to be invalid since then. $(InvalidList, UncheckedList)$ may disappear in the distributed system after a certain time since it needn't be permanent.

\subsubsection{Consensus Reaching}

 The leader in each round is elected from all governors. Due to the permissioned system, the leader selection from governors is under the setting of \textit{Proof-of-Stake} (PoS), in which the probability that a governor is selected as the leader is proportional to his stake.

Specifically, suppose governor $g\in G$ owns several units of stake. In practice, the stake may be money or any reliable form of assets. In round $r$, $g$ calculates a hash value with the proof on his own via a \textit{Verifiable Random Function} (VRF)~\cite{MicaliRV99verifiable} scheme for each unit of stake,
and disseminates this hash value with proof to all other governors by invoking $broadcast_{governor}(\cdot)$.
When a governor receives all the hash values and corresponding proofs from other governors, he first validates the proof to check whether the hash value is correct. If the verification is passed, the owner of the stake unit with the least hash value becomes the leading governor of this round.

Such a design is much unsecured in permissionless settings as the leader of a round is predictable. Self-election mechanisms in permissionless blockchains are much more complicated~\cite{GiladHMVZ17}~\cite{KiayiasRDO17}~\cite{DavidGKR18}~\cite{BentovPS16a}. However, in a permissioned environment, we may assume that these governors will not perform malicious behaviors rather than hiding transactions. In other words, there is no motivation for them to damage the chain. In this sense, as the VRF scheme ensures pseudorandomness of the hash value, our leader-electing mechanism guarantees pseudorandomness as well.

In addition, the governors maintain two counters: $cnt_i$ to count how many transactions provided by provider $p_i$ have been verified; and $T_i$ which is initially set to a predetermined integer $T>0$, and doubled every time $cnt_i$ reaches the current value of $T_i$. Recall $p_i$ has $u_i$ collector neighbors, denoted by $\{c_i(1),\cdots,c_i(u_i)\}$, and each of them has reputation $r_i(k)$ on $p_i$. Once $cnt_i$ reaches $T_i$, a constant proportion of the profit gained by executing these transactions will be allotted to collectors according to their reputations. Concretely, collector $c_i(k)$'s revenue would be in proportion to $\exp(\mu r_i(k))$ ($k=1,\cdots,u_i$), where $\mu>0$ is a parameter that can be adjusted at any time. After the revenue allotting, $\{r_i(k)\}_{k}$, the reputations of collectors on provider $p_i$, shall all be reset to 0, and the value of $T_i$ shall be doubled. Since $cnt_i$ is broadcast, governors can easily synchronize reputations even after being reset.
 
Besides, when stakes are transformed among governors, governors related to the transaction should broadcast the signed transaction to all governors. The current leader would then present a stake-transform block at the end of the round.




\section{Analysis}\label{Analysis}

In this section, we shall first analyze the properties our protocol has, and then analyze its efficiency by discussing the complexity in block generation and the reputation mechanism. In addition, we make a discussion on the incentives provided by the reputation mechanism.

\subsection{An Analysis of Consensus Reaching}

Because our protocol is implemented on a permissioned blockchain system, resembling Hyperledger Fabric~\cite{androulaki2018hyperledger}, it has the following properties:

\begin{itemize}
    \item \textit{Agreement}: For any two blocks $B$ and $B'$ retrieved with the same serial number $s$, $B = B'$;

    \item \textit{Chain Integrity}: For any two blocks $B$ and $B'$, if $B$ is retrieved with serial number $s$ while $B'$ is with serial number $s+1$, then the hash value $h'$ contained in $B'$ satisfies $h' = H(B)$, where $H$ is a public collision-resistant hash function (CRHF).

    \item \textit{No Skipping}: Once a block with the serial number $s$ is retrieved by a node, then blocks with all previous serial numbers (i.e., $1, \cdots, s - 1$) are already retrieved before.

    \item \textit{Almost No Creation}: When an \textit{honest} node perceives $tx$ when retrieving a block $B$, then excluding negligible probability of $\lambda$, $tx$ has been broadcast both via $broadcast_{provider}(\cdot)$ and $broadcast_{collector}(\cdot)$ previously. Here $\lambda$ is the security parameter of the protocol.

\end{itemize}

In the consensus reaching process, we use \textit{Verifiable random function} (VRF) scheme among governors to select a leader pseudorandomly. As governors are believed with no motivation to damage the consistency of the chain, blocks are proposed safely. Thus, every governor {\em adopts the same block in each round}, which derives the properties of $Agreement$, $Chain Integrity$ and $No Skipping$ of our protocol. Meanwhile, contributed by the uploading phase, every transaction packed in the block must be labeled $+1$ by at least one collector. At the same time, the collector cannot forge transactions without a provider's public key except with negligible probability of the security parameter $\lambda$. As a result, the property of $Almost No Creation$ can be pledged.

To reach consensus on an ordinary block which contains transactions offered by governors, an $O(b_{limit}m)$ communication complexity emerges.  Meanwhile, to generate a stake-transform block, governors shall broadcast a transaction message to others, which will cause a communication complexity of $O(m^2)$. However, recall that in the permissioned blockchain scheme, the number of governors is much less than the number of providers and collectors. So, the complexity of consensus is not the main efficiency problem that may be suffered from this scheme.

\subsection{Efficiency Analysis for the Reputation Mechanism}

Let us consider one-governor and one-provider case, which can be easily extended to general case, and analyze the efficiency of our protocol.

\begin{theorem}\label{rep}
Consider a group of collectors who have the connections with a same provider $p$. Suppose provider $p$ offers $T$ transactions. Let $\mathcal{L}_T$ be the accumulated expected verification loss on these transactions for the governor, and $\mathcal{S}_T^{min}$ be the accumulated verification loss for the best-behaving collector on these transactions, with any parameter $\eta>0$, we have
\begin{equation}
\mathcal{L}_T-\mathcal{S}_T^{min} \leq \frac{\ln u}{\eta}+\frac{\eta T}{2}, \end{equation}
where $u$ is the number of collectors linked to provider $p$.
\end{theorem}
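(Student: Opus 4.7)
My plan is to recognize the bound as the classical regret guarantee of the Hedge (exponential weights) algorithm applied to the $u$ collectors adjacent to $p$ as experts, and to push through the standard potential-function proof in the reputation notation of the paper. The sampling rule with probabilities $\propto \exp(\eta r_i(k))$ together with the unit-decrement reputation update in Algorithm 3 is precisely the Hedge update with $\{0,1\}$-valued losses recording mislabels, so the round-$t$ distribution is $p_k^t = \exp(\eta r_k^t)/\sum_j \exp(\eta r_j^t)$, the initial reputations $r_k^1 = 0$ give $\Phi_1 = u$ for the potential $\Phi_t := \sum_k \exp(\eta r_k^t)$, and one step of the algorithm updates $r_k^{t+1} = r_k^t - \ell_k^t$ where $\ell_k^t \in [0,1]$ is the loss of collector $k$ on round $t$.

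Then I would run the two standard steps of the Hedge argument. First, an upper bound via $e^{-x} \leq 1 - x + x^2/2$ for $x \geq 0$ followed by $\ln(1+y) \leq y$, giving $\Phi_{t+1}/\Phi_t \leq 1 - \eta \bar{\ell}_t + \eta^2/2$ where $\bar{\ell}_t = \sum_k p_k^t \ell_k^t$, so that telescoping $t=1,\dots,T$ yields $\ln \Phi_{T+1} \leq \ln u - \eta \mathcal{L}_T + \eta^2 T/2$ with $\mathcal{L}_T = \sum_t \bar{\ell}_t$. Second, a lower bound via $\Phi_{T+1} \geq \exp(\eta r_{k^*}^{T+1}) = \exp(-\eta \mathcal{S}_T^{min})$ by keeping only the best-behaving collector $k^*$'s contribution, so that
\[ -\eta \mathcal{S}_T^{min} \leq \ln u - \eta \mathcal{L}_T + \eta^2 T/2, \]
which immediately rearranges to the announced $\mathcal{L}_T - \mathcal{S}_T^{min} \leq \ln u / \eta + \eta T/2$.

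The step that needs the most care is aligning the algorithm's reputation update with the loss function entering the Hedge analysis. Algorithm 3 punishes a mislabel in either direction by one unit and skips the update on rounds without verification, so the natural Hedge loss is the mislabel indicator on checked rounds and zero on unchecked rounds, which gives the correct per-round bound but possibly over-counts compared with the strict ``invalid-and-checked'' definition of verification loss. Because every round in which an invalid transaction is actually checked is also one in which the selected collector labeled $+1$ on an invalid transaction, the governor's verification loss and the best collector's verification loss are each upper-bounded by their corresponding mislabel counts, so the Hedge bound above transfers to the desired inequality on $\mathcal{L}_T - \mathcal{S}_T^{min}$. Once this bookkeeping is made explicit, the rest of the argument is routine and the bound follows for every $\eta > 0$.
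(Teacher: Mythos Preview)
Your approach is correct and is essentially the paper's own proof: the paper also recognizes the sampling/reputation rule as the Hedge algorithm and runs the same potential argument with $\Phi(t)=\tfrac{1}{\eta}\ln\sum_{k}\exp(\eta r_k^{(t)})$ (the logarithm of your $\Phi_t$), the inequality $e^{x}\le 1+x+x^2/2$ for $x\le 0$, the bound $\ln(1+y)\le y$, the telescoping step, and the lower bound $\Phi(T+1)\ge r_{i^*}^{(T+1)}=-\mathcal{S}_T^{\min}$.

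Your last paragraph over-thinks the bookkeeping and contains a small logical slip. In the paper, $L(t)$ is \emph{defined} to be $-\sum_k p_k^{(t)} w_k^{(t)}$ (exactly your $\bar\ell_t$), and $\mathcal{S}_T^{\min}$ is \emph{defined} to be $-r_{i^*}^{(T+1)}$; on rounds where no verification occurs one simply has $w_k^{(t)}=0$ for all $k$, so $L(t)=0$ and the potential does not move. No separate ``transfer'' from a mislabel loss to an ``invalid-and-checked'' loss is needed. Your proposed transfer step (``each quantity is upper-bounded by its mislabel count, hence so is the difference'') is not valid in general, since $A\le A'$ and $B\le B'$ do not imply $A-B\le A'-B'$; fortunately the paper's definitions make that step unnecessary.
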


\begin{proof}
For simplicity, let us denote the reputation of collector $c_i$ on provider $p$ by $r_i$ for any $i=1,\cdots,u$. Our result for one-governor and one-provider case is an extension of the result for the Hedge Algorithm in the learning with expert advice problem~\cite{mohri2018foundations}. So we also follow the potential method to prove it.

The governor deals with all $T$ transactions one by one and updates reputations $\{r_i\}$ of all collectors correspondingly. Let us denote $c_i$'s expected reputation before dealing with the $t$-th transaction by $r_i^{(t)}$.
Then the probability to choose one collector $c_i$ for the $t$-th transaction is $p_i^{(t)}=\frac{\exp{\left(\eta r_i^{(t)}\right)}}{\sum_{j=1}^u\exp{\left(\eta r_j^{(t)}\right)}}$. Denote the change of $c_i$'s reputation by $w_i^{(t)}=r_i^{(t+1)}-r_i^{(t)}$. Note that for the $t$-th transaction,
one of the following events must happen:
\begin{enumerate}
    \item[(1)]
    If collector $c_i$ labels the $t$-th transaction as $-1$, then governor decides not to verify it. So $w_{\ell}^{(t)}=0$ and thus $r_{\ell}^{(t+1)}=r_{\ell}^{(t)}$, for any $\ell=1,\cdots,u$;
    \item[(2)]
    When collector $c_i$ labels the $t$-th transaction as $+1$, then governor verifies the transaction.
    \begin{enumerate}
      \item[(2.1)]
      If the verification result is ``valid", then $w_{\ell}^{(t)}=0$ for each collector $c_{\ell}$, who labels the $t$-th transaction as $+1$; and $w_{\ell}^{(t)}=-1$ for any collector $c_{\ell}$, who labels the $t$-th transaction as $-1$ or fails to report this transaction;
      \item[(2.2)]
      If the verification result is ``invalid", then $w_{\ell}^{(t)}=-1$ for all collectors who label the $t$-th transaction as $+1$, and $w_{\ell}^{(t)}=0$ for others.
    \end{enumerate}
\end{enumerate}
Define potential function $\Phi(t)=\frac{1}{\eta}\ln\sum_{i=1}^u\exp{\left(\eta r_i^{(t)}\right)}$. So,
{\small\begin{align}
    &\phantom{=} \Phi(t+1)-\Phi(t)\nonumber\\
    &= \frac{1}{\eta}\ln\frac{\sum_{i=1}^u\exp{\left(\eta r_i^{(t+1)}\right)}}{\sum_{j=1}^u\exp{\left(\eta r_j^{(t)}\right)}}= \frac{1}{\eta}\ln\sum_{i=1}^u\frac{\exp{\left(\eta (r_i^{(t)}+w_i^{(t)})\right)}}{\sum_{j=1}^u\exp{\left(\eta r_j^{(t)}\right)}}\nonumber\\
    &= \frac{1}{\eta}\ln\sum_{i=1}^up_i^{(t)}\mathrm{e}^{\eta w_i^{(t)}}\leq \frac{1}{\eta}\ln\sum_{i=1}^up_i^{(t)}\left(1+\eta w_i^{(t)}+\frac{\left(\eta w_i^{(t)}\right)^2}{2}\right)\label{ineq:potentialDiff1}\\
    &= \frac{1}{\eta}\ln\left(1+\eta\sum_{i=1}^up_i^{(t)}w_i^{(t)}+\frac{\eta^2\sum_{i=1}^up_i^{(t)}\left(w_i^{(t)}\right)^2}{2}\right)\nonumber\\
    &\leq \frac{1}{\eta}\left(\eta\sum_{i=1}^up_i^{(t)}w_i^{(t)}+\frac{\eta^2\sum_{i=1}^up_i^{(t)}\left(w_i^{(t)}\right)^2}{2}\right)\label{ineq:potentialDiff2}\\
    &= \sum_{i=1}^up_i^{(t)}w_i^{(t)}+\frac{\eta\sum_{i=1}^up_i^{(t)}\left(w_i^{(t)}\right)^2}{2}\nonumber\\
    &= -L(t)+\frac{\eta\sum_{i=1}^up_i^{(t)}\left(w_i^{(t)}\right)^2}{2}\leq -L(t)+\frac{\eta}{2},\label{ineq:potentialDiff3}
\end{align}}
where $L(t)$ is denoted to be the expected verification loss for the $t$-th transaction;
\eqref{ineq:potentialDiff1} is right since $\mathrm{e}^x\leq1+x+\frac{x^2}{2}$ for $x\leq0$, and $w_i^{(t)}\leq0$; \eqref{ineq:potentialDiff2} can be deduced because $\ln(1+x) \leq x$ for $x\geq-1$; \eqref{ineq:potentialDiff3} holds as $\left|w_i^{(t)}\right|\leq1$ and $\sum_{i=1}^up_i^{(t)}=1$. Let us sum \eqref{ineq:potentialDiff3} from $t=1$ to $T$, and notice that $\Phi(1)=\frac{\log u}{\eta}$, then we can get
{\small \begin{equation*}
    \Phi(T+1)-\frac{\ln u}{\eta}\leq-\sum_{t=1}^{T}L(t)+\frac{\eta T}{2}=-\mathcal{L}_T+\frac{\eta T}{2},
\end{equation*}}
which is equivalent to
{\small\begin{equation}
    \mathcal{L}_T+\Phi(T+1)\leq\frac{\ln u}{\eta}+\frac{\eta T}{2}.\label{eq:potential1}
\end{equation}}
Finally, by the definition of $\Phi(T)$ we have
{\small\begin{align}
    \Phi(T+1)
    = \frac{1}{\eta}\ln\sum_{i=1}^u\mathrm{e}^{\eta r_i^{(T+1)}}\geq \frac{1}{\eta}\ln\mathrm{e}^{\eta r_{i^*}^{(T+1)}}
    = r_{i^*}^{(T+1)}= -\mathcal{S}_T^{min},\label{eq:potential2}
\end{align}}
where $i^*$ is the index of the collector with the highest reputation (that is, the collector who behaves best). Combining \eqref{eq:potential1} and \eqref{eq:potential2}, we immediately obtain
{\small\begin{equation*}
    \mathcal{L}_T-\mathcal{S}_T^{min}\leq\frac{\ln u}{\eta}+\frac{\eta T}{2}.
\end{equation*}}
This completes our proof. $\blacksquare$
\end{proof}

Note that for any $T$, we can specify $\eta=\sqrt{\frac{\ln u}{T}}$ and get $\mathcal{L}_T-\mathcal{S}_T^{\mathrm{min}} \leq \frac{3}{2}\sqrt{T\ln u}$. We can double the value of $T$ and reset $\eta$ every time $T$ transactions have been provided by provider $p$. As a result, after $T_{total}=T+2T+\cdots+2^{\ell-1}T=(2^\ell-1)T$ transactions, we have
\begin{align*}
&\phantom{=} \mathcal{L}_{T_{total}}-\mathcal{S}_{T_{total}}^{\mathrm{min}}\\
&\leq \frac{3}{2}\sqrt{T\ln u}+\frac{3}{2}\sqrt{2T\ln u}+\cdots+\frac{3}{2}\sqrt{2^{\ell-1}T\ln u}\\
&=\frac{3}{2}\sqrt{T\ln u}\sum_{k=0}^{\ell-1}(\sqrt{2})^k
= \frac{3}{2(\sqrt{2}-1)}\sqrt{T\ln u}\left((\sqrt{2})^\ell-1\right)\\
&\leq \frac{3}{2(\sqrt{2}-1)}\sqrt{T\ln u}\sqrt{2^{\ell}-1}
= \frac{3\sqrt{\ln u}}{2(\sqrt{2}-1)}\sqrt{(2^{\ell}-1)T}\\
&= O\left(\sqrt{(2^{\ell}-1)T}\right)=O\left(\sqrt{T_{total}}\right).
\end{align*}
As $\ell$ can be arbitrarily large, we thus obtain the $O\left(\sqrt{T_{total}}\right)$ bound of loss for arbitrarily large $T_{total}$.

For the loss of discarding valid transactions, note that we allow providers to repeatedly uploading their transactions as long as they haven't appeared on the chain. As a result, as long as there are honestly behaving collectors, for any valid transaction, there is a positive probability for it to be checked and recorded on the chain every time it is uploaded by the provider. By Law of Large Numbers, it is expected for this transaction to be recorded on the chain in constant rounds of block proposing.

\subsection{Incentive Analysis for the Reputation Mechanism}

A considerable benefit of introducing reputation into our protocol is that we can provide incentives for collectors to behave honestly via the reputation mechanism. In general, as participants in a permissioned system have a sufficient amount of computation resources to complete a given task, we hope a collector's reputation to be a good measurement of his reliability and honesty.

Note that the safety of the protocol is already demonstrated in the previous subsection via the four properties. Typically, to damage the efficiency of our protocol, there are three types of misbehaviors that a collector may commit: (1) to misreport the correct status (\textit{valid} or \textit{invalid}) of a transaction, (2) fail to report a transaction he receives from a provider, and (3) to forge a transaction and report it to the governor.

To simplify the incentive analysis, we consider a situation in which a provider with all collectors he is linked with as well as a governor who is in connection with all these collectors. For an invalid transaction, if a collector misreports it with a valid label, then there is a positive possibility that this transaction is checked by the governor. Once this case happens, the misreporting collector will suffer a reduction in his reputation. For a valid transaction, on the other hand, as long as there are honestly behaving collectors, there is a positive probability that this transaction is checked. Conditioning on it being checked, any collectors who misreport its tag or simply discard it will suffer losses on their reputations.

For those who try to forge a transaction, due to the security of the digital signature scheme, there is only with negligible probability of $\lambda$ to trump up a never-occurred transaction. A malicious collector cannot simply replicate a transaction as well since the transaction is signed together with the timestamp. In all, any fabricating behavior will be detected by the governor except a negligible probability of $\lambda$. Once such behavior is noticed, a loss on the conductor's reputation will severely reduce the revenue of the criminal.

Note that a collector's reputation directly reflects on his revenue. Once a provider $p_i$ generates $T_i$ transactions, collector $c_i(k)$ obtain his revenue proportional to $\exp(\mu r_i(k))$ with $\mu>0$ ($k=1,\cdots,u_i$). Clearly, the more unreliable $c_i(k)$ is, the less $r_i(k)$ will be, as $\mu>0$, the smaller $\exp(\mu r_i(k))$ will be, and less profit $c_i(k)$ will gain. In this sense, our reputation mechanism provides high incentives for collectors to follow the instructions.


\section{Applications and Use Cases}\label{Applications}

The key object of our protocol is to reduce the cost of verifying invalid transactions. The reputation mechanism helps us pick out those collectors with higher reliability, encouraging them to have more incentive to behave truthfully. In this sense, the high efficiency to collect valid transactions and pack them into a block can be realized. In this section, we will show that our model can be well applied in the field of IoT (Internet of Things) data collection and horizontal strategic alliances.

\subsection{IoT Data Collection}

    IoT (Internet of Things) is now everywhere. Millions of front-end devices play an essential role in our life. Data collecting for far-end cloud nodes is a key issue in IoT. Let us consider the situation that cloud nodes need to collect data that meets some certain conditions. It's impossible that all devices send their data to cloud nodes due to the burden of bandwidth.

    One solution is to introduce edge nodes. Edge nodes may be physically close to devices so it costs little for them to obtain data from devices. In addition, edge nodes also can help cloud nodes to filter the data in advance, and then send the filtered data to the cloud nodes. However, edge nodes couldn't be always reliable. They may deliberately misjudge the data or even send a large amount of unqualified data to cloud nodes to paralyze the network and cloud nodes. In addition, the interaction among cloud nodes in such a distributed IoT system also faces difficulties. Besides processing data, cloud nodes need to record the history of data collection and reach a consensus on history. This problem still remains unresolved.

    The protocol we provide in this paper just is a solution for the problems in IoT system, where the devices, edge nodes and cloud nodes can be viewed as providers, collectors and governors, respectively. To be specific, each device obtains data to generate a transaction and then forwards the transaction to edge nodes. The content of a transaction is just the hash value of data together with the signature. Validating a transaction needs to fetch the original data. Edge nodes verify the data in transactions from devices and label them. If an edge node is honest, then it labels a transaction as $+1$, once the data in it meets conditions proposed by cloud nodes; otherwise, the label is $-1$. After labeling the transaction, edge node sends the label and hash value of the data to the leader, who is one cloud node selected in each round and responsible for generating a block. Once receiving hash values and labels from edge nodes, the leader shall make a decision whether to validate the data. If the leader decides to validate based on the labels and reputations of corresponding edge nodes, then he shall fetch the data to verify the hash value and validate the origin data. After validating operation, the leader updates the reputations of all edge nodes according to the validating result. At the end of each round, the leader would generate a block, including the valid transactions and the hash root of other transactions. At the same time, profits are allocated to edge nodes according to their reputations.

    Our protocol in this work can greatly improve the efficiency of data collection for IoT system. Reputation mechanism makes it easy to reduce the cost of invalid data transportation. What's more, reputation mechanism encourages edge nodes to perform honestly. Our scheme also provides higher reliability since cloud nodes can reach consensus and get immutable history, and thus the scenario of IoT system can be extended to other data exchange applications.


\subsection{Horizontal Strategic Alliances}

The strategic alliance is an agreement between two or more companies to pursue mutual benefits~\cite{cravens1993analysis}. Some companies in the same areas establish a strategic alliance for some specific purposes to reduce costs and improve customer service. The strategic alliance above is called a horizontal strategic alliance~\cite{hsa2001eu}. A horizontal alliance is usually managed by a team with members from each company and bounded by an agreement giving an equal risk and opportunity.

In a horizontal strategic alliance, companies unite for a common interest, and there is a need for profit allocation and responsibility for mistakes. The alliance usually needs to collect resources and verify the quality about these resources for their customers. For example, the alliance may need to collect and analyze data from market to optimize the production. However, there is a lot of fraud in market research data, but verifying the correctness of data is expensive. Intermediaries are sometimes introduced since they can usually verify first-hand data more easily. However, intermediaries are not reliable enough that they may deceive companies, which would lead to extra cost for verifying. Therefore, the alliance companies need to find an efficient way to deal with this situation.
 
Our protocol fits well with this problem. For such a three-tier structure of resources, intermediaries and companies, resources serve as providers who generate and provide transactions, intermediaries are responsible for collecting, verifying, and submitting transactions. Meanwhile, companies perform reliant governors in this hierarchy. With the help of reputation mechanism, companies could decrease unnecessary cost of verifying transactions and intermediaries would have incentive to give honest validation for transactions. What's more, the existence of immutable history of blockchain ledger would help companies with their cooperation.

\section{Conclusion}\label{Conclusion}

In this work, we propose a permissioned blockchain model for a hierarchical system, in which there are three types of participants, namely, providers, collectors and governors. They are abstracted from the real world and have their own tasks: providers forward transactions to collectors; collectors label received transactions after checking them, and then broadcast them to governors; governors validate part of labeled transactions, pack valid ones into a block, and append a new block on ledger. Owing to the selfishness of collectors and the overlap of transactions that collectors receive from providers, governors face difficulties that are how to make a judgement on a transaction based on different opinions from different collectors and how to encourage collectors to label the transactions truthfully. To overcome these problems, we design a protocol to collect, verify and pack transactions for the hierarchical permissioned environment, by introducing reputation as a measure on the reliability of collectors. Meanwhile, the collector's revenue is closely related to his reputation. On the one hand, the reputation mechanism used in our protocol can make collectors have incentives to behave honestly. On the other hand, we theoretically prove that our protocol has a significant improvement in efficiency to reduce the verification loss. In a word, with the help of the reputation mechanism, our protocol has high efficiency, great performance, and good incentives. Furthermore, the two applications of IoT Data collection and horizontal strategic alliances show our protocol is more applicable in practice.

\noindent $\mathbf{Acknowledgements}$ Hongying Chen, Zhaohua Chen and Hongyi Ling contributed equally to this work and are joint first authors. Yukun Cheng and Xiaotie Deng are corresponding authors.This research was supported by the National Nature Science Foundation of China (Nos. 11871366, 61761146005, 61632017, 61803279), Qing Lan Project for Young Academic Leaders, Qing Lan Project for Key Teachers.

\bibliographystyle{abbrv}
\bibliography{references}

\end{document}